\newtheorem{theorem}{\sc{Theorem}}
\newenvironment{proof}{\paragraph*{Proof}}{\hfill$\Box$}
\newcommand{\CX}{\mathrm{CX}}
\newcommand{\SWAP}{\mathrm{SWAP}}
\tikzset{every picture/.style={>=stealth'}}
\tikzstyle{region}     			= [line width=2pt, rounded corners]
\newcommand{\tikzdot}[1]{
  \protect\tikz[baseline=-3.4pt]{
    \fill[color=#1, opacity=0.7] circle[radius=1.7pt];
  }
}
\colorlet{cblue}{cyan}
\tikzstyle{dots_cblue}=[only marks, mark=*, mark size=1.8pt, mark options={color=cblue, fill=cblue, opacity=0.5}]
\tikzstyle{x_cblue}=[only marks, mark=x, mark size=2.2pt, mark options={color=cblue, fill=cblue, opacity=0.7}]
\tikzstyle{plot_cblue}=[color=cblue, dashed, line width=0.7pt, opacity=0.7]
\colorlet{corange}{orange}
\tikzstyle{dots_corange}=[only marks, mark=*, mark size=1.8pt, mark options={color=corange, fill=corange, opacity=0.5}]
\tikzstyle{x_corange}=[only marks, mark=x, mark size=2.2pt, mark options={color=corange, fill=corange, opacity=0.7}]
\tikzstyle{plot_corange}=[color=corange, dashed, line width=0.7pt, opacity=0.7]
\title{Practical Subarchitectures for \\ Optimal Quantum Layout Synthesis
\ifdefined\doubleblind
\else
\thanks{Part of this research is funded by the Innovation Fund Denmark.}
\fi
}
\author{\IEEEauthorblockN{Anonymous Author(s) }}
\author{\IEEEauthorblockN{1\textsuperscript{st} Kostiantyn V. Milkevych }
\IEEEauthorblockA{\textit{Department of Computer Science} \\
\textit{Aarhus University}\\
Aarhus, Denmark \\
\href{mailto:kmilkevych@gmail.com}{kmilkevych@gmail.com}}
\and
\IEEEauthorblockN{2\textsuperscript{nd} Jaco van de Pol}
\IEEEauthorblockA{\textit{Department of Computer Science} \\
\textit{Aarhus University}\\
Aarhus, Denmark \\
\url{https://orcid.org/0000-0003-4305-0625}}
\and
\IEEEauthorblockN{3\textsuperscript{rd} Irfansha Shaik}
\IEEEauthorblockA{\textit{Kvantify Aps.}\\
Copenhagen, Denmark}
\IEEEauthorblockA{\textit{Dept. of Computer Science,
Aarhus University}\\
Aarhus, Denmark \\
\url{https://orcid.org/0000-0002-7404-348X}}
}
\begin{document}

\maketitle

\begin{abstract}
Quantum Layout Synthesis (QLS) maps a logical quantum circuit to a physical quantum platform.
Optimal QLS minimizes circuit size and depth, which is essential to reduce the noise on current
quantum platforms. Optimal QLS is an NP-hard problem, so in practice, one maps a quantum
circuit to a subset of the complete quantum platform. However, to guarantee optimality,
one still has to consider exponentially many subarchitectures.

We introduce an effective method to enumerate relevant subarchitectures.
This reduces the number of considered subarchitectures, 
as well as the number of expensive subgraph isomorphism checks, 
thus boosting Optimal QLS with subarchitectures.
To do so, we assume a fixed number of ancilla qubits that can be used in the mapping. 
We guarantee optimality of the quantum layout, for the selected ancilla bound.

We evaluate our technique on a number of benchmarks and compare it with state-of-the-art 
Optimal QLS tools with and without using subarchitectures.
\end{abstract}

\begin{IEEEkeywords}
Quantum Layout Synthesis, Qubit Allocation, Subgraph isomorphism, Subgraph enumeration.
\end{IEEEkeywords}

\section{Introduction}
To execute quantum algorithms, they must first be compiled for a particular quantum computer platform.
One of the last steps in the compilation is \emph{layout synthesis}. 
Here the ``logical'' qubits of a quantum circuit are \emph{allocated} to ``physical'' qubits,
and \emph{routed} by inserting SWAP-operations. This is needed, since many (superconducting) 
platforms restrict binary operations to connected physical qubits, specified by a \emph{coupling graph}.

Since each inserted SWAP causes additional noise when executing the circuit,
the goal of \emph{optimal layout synthesis} (OLS) is to insert the minimal number of SWAP 
operations.\footnote{Another popular metric is to minimize the depth of the circuit.}
OLS is an NP-complete problem \cite{Siraichi_2018}, so many approaches have been suggested.
\emph{Heuristic} methods~\cite{Li_2019, Zulehner_2019}, like eager approaches or A* search, are fast but lead to highly
non-optimal solutions~\cite{Tan_2021}. \emph{Exact} methods~\cite{Wille_2019,Shaik_2023,Shaik_2024,WanHsuan_2023} 
are often based on leveraging domain-independent solvers for SAT, SMT, or planning. These methods provide
a provably optimal layout, but don't scale for larger instances. Several \emph{near-optimal}
\cite{Tan_2020,WanHsuan_2023} solutions have been proposed, to find high-quality solutions faster for many cases.

A prominent example is the use of \emph{subarchitectures} \cite{Wille_2019}, in which a quantum circuit
of $n$ qubits is mapped onto a subgraph of $n$ physical qubits from the full architecture.
The smaller mapping tasks are simpler, and only the best result is kept. 
This method is only near-optimal, since it doesn't consider subarchitectures with more than
$n$ qubits, so-called ancillaries, which may admit even less SWAP operations~\cite{Peham_2023}.
Those authors demonstrated that it is very difficult to determine the required size
of the subarchitecture. 
They proposed a near-optimal solution, reducing the number of considered subarchitectures
by combining them into larger \emph{coverings}. Their method involves enumerating all
subgraphs of the coupling graph, which is unfeasible for larger quantum platforms.
So far, subarchitectures have hardly been effective for OLS.

\paragraph*{Contribution}
In this paper, we propose an alternative approach and fast algorithms for subarchitecture
selection. Our approach guarantees to find an optimal layout among those using at most $k$
ancillaries, where $k$ is a parameter. We only consider \emph{maximal} subarchitectures,
in the sense of subgraph isomorphism. To compute these candidates quickly, we propose a
combination of graph hashing~\cite{Shervashidze_2011} and the direct enumeration of 
connected subgraphs of size $n+k$~\cite{Karakashian_2013}.

We implement our method 
\ifdefined\doubleblind \else
in the tool Q-Synth v4\footnote{\url{https://github.com/irfansha/q-synth}}
\fi
and compare it with using subarchitecture coverings in the tool MQT QMAP~\cite{Wille_2023,Qmap}, 
and with the SAT-based tool Q-Synth~\cite{Shaik_2024,Q-Synth} on quantum platforms
with 100+ qubits. Optimal solutions are found for larger instances and in less 
time than previous approaches.
Our experiments demonstrate that we made subarchitectures practical for 
\emph{optimal quantum layout synthesis}.

\section{Preliminaries}
\paragraph*{Graph terminology}
A \emph{graph} $G=(V,E)$ has a set of \emph{vertices} $V$ and a set of \emph{edges} $E\subseteq V\times V$.
We consider undirected graphs: $\forall v,w{\,\in\,} V(vEw\iff wEv)$. We write $G'\sqsubseteq G$ if
$G'=(V',E')$ is a \emph{subgraph} of $G$, i.e.\ $V'\subseteq V$ and $E'\subseteq E$. 
Subgraph $G'$ is \emph{induced} in $G$ if $\forall v,w{\,\in\,}V'(vE'w \iff vEw$). We write $G\simeq G'$ if $G$ and $G'$
are \emph{isomorphic}, i.e.\ there is a bijection $h:V\to V'$ s.t.\ 
$\forall v,w{\,\in\,}V (vEw \iff h(v)E'h(w))$. $G'$ is \emph{subgraph isomorphic} to $G$,
if $G'\simeq G''$ for some $G''\sqsubseteq G$. A path of length $\ell$ in $G=(V,E)$ is a sequence
of vertices $v_1;\ldots;v_\ell$ such that $v_i\in V$ and $v_i \,E\, v_{i+1}$ ($\forall i$). A graph is
\emph{connected} if there exists a path between any two vertices in $V$.

\paragraph*{Quantum Circuits}
A \emph{quantum circuit} operates on $n$ (logical) qubits $L=\{q_1,\ldots,q_n\}$.
A quantum circuit (of size $m$) is simply a sequence of quantum gates $\Gamma = g_1;\ldots;g_m$.
We assume that the circuit has already been compiled to the gate set of some quantum platform,
consisting only of unary gates and the binary gate $\CX$ (Controlled NOT). A unary gate
is of the form $U(q_i)$, where $U$ is the gate name. Each binary gate is of the form
$\CX(q_i,q_j)$, $i\neq j$, with control qubit $i$ and target qubit $j$. The semantics of these gates
do not play a role in this paper; we refer the interested reader to~\cite{Nielsen_2010}. 
Figure~\ref{fig:examplea} shows a quantum circuit on 4 qubits, consisting of 4 CX-gates.

%\begin{figure}[thbp]
%\centering\vspace{-2em}
%\begin{tabular}{cc}
%\begin{tabular}{c}
\begin{figure*}[tbhp]
    \begin{subfigure}[b]{0.2\textwidth}\centering
    \scalebox{1}{
      \Qcircuit @C=0.7em @R=0.3em @!R { \\
      \lstick{{q}_{1} :} & \ctrl{1} & \qw      & \qw      & \targ    & \qw\\
      \lstick{{q}_{2} :} & \targ    & \ctrl{1} & \qw      & \qw      & \qw\\
      \lstick{{q}_{3} :} & \qw      & \targ    & \ctrl{1} & \qw      & \qw\\
      \lstick{{q}_{4} :} & \qw      & \qw      & \targ     & \ctrl{-3} & \qw\\
      &
      }}
      \medskip
      \caption{Logical quantum circuit:\\ with 4 qubits, 4 CX gates\label{fig:examplea}}
    \end{subfigure}
\quad
%\end{tabular}
%&
%\begin{tabular}{c}
\begin{subfigure}[b]{0.16\textwidth}\centering
    \scalebox{0.7}{
        \begin{tikzpicture}
        \draw[line width=1.25pt] (18:1.25) arc[start angle=18, end angle=378, radius=1.2]
            foreach \r in {1,2,...,5} {
                node[pos={1/5*\r}, circle, draw, fill=white,inner sep=2pt] {$Q_\r$}
            };
        \end{tikzpicture}}
        \medskip
        \caption{Quantum platform:\\ with 5 physical qubits\label{fig:exampleb}}
    \end{subfigure}
\qquad
%\end{tabular}
%\end{tabular}
%\caption{Logical circuit (left) with $4$-qubits and 4 CNOT gates,
%to be mapped on a coupling graph (right) with 5 physical qubits in a cycle,
%adapted from~\cite{Peham_2023}.}
%\label{fig:logcircuit}
%\end{figure}
%
%\begin{figure}[thbp]
%\hfill
\begin{subfigure}[b]{0.26\textwidth}\centering
    \scalebox{0.95}{
        \Qcircuit @C=0.8em @R=0.3em @!R {
        \lstick{q_1{\to} Q_1 :} & \ctrl{1} & \qw      & \qswap     & \qw        & \qw       & \qw & q_2\\
        \lstick{q_2{\to} Q_2 :} & \targ    & \ctrl{1} & \qswap\qwx & \qswap     & \qw       & \qw & q_3\\
        \lstick{q_3{\to} Q_3 :} & \qw      & \targ    & \ctrl{1}   & \qswap\qwx & \targ     & \qw & q_1\\
        \lstick{q_4{\to} Q_4 :} & \qw      & \qw      & \targ      & \qw        & \ctrl{-1} & \qw & q_4\\
        \lstick{         Q_5 :} & \qw      & \qw      & \qw        & \qw        & \qw       & \qw
    }}
    \medskip
    \caption{Mapped circuit without ancillas:\\ using 2 SWAP gates\label{fig:examplec}}
\end{subfigure}
\quad
%~\hfill~
\begin{subfigure}[b]{0.26\textwidth}\centering
    \scalebox{0.95}{
        \Qcircuit @C=0.8em @R=0.3em @!R {
        \lstick{q_1{\to} Q_1 :} & \ctrl{1} & \qw      & \qw      & \qswap         & \qw      & \qw & \\
        \lstick{q_2{\to} Q_2 :} & \targ    & \ctrl{1} & \qw      & \qw            & \qw      & \qw & q_2\\
        \lstick{q_3{\to} Q_3 :} & \qw      & \targ    & \ctrl{1} & \qw            & \qw      & \qw & q_3\\
        \lstick{q_4{\to} Q_4 :} & \qw      & \qw      & \targ    & \qw            & \ctrl{1} & \qw & q_4\\
        \lstick{         Q_5 :} & \qw      & \qw      & \qw      & \qswap\qwx[-4] & \targ    & \qw & q_1
    }}
    \medskip
    \caption{Mapped with 1 ancilla qubit:\\ using 1 SWAP gate only\label{fig:exampled}}
%\hfill
\end{subfigure}
%\medskip
%\caption{Mapped circuit, without ancilla using 2 SWAP gates (left) and with ancilla using only 1 SWAP gate (right),
%adapted from~\cite{Peham_2023}.
\caption{Example of Quantum Layout Synthesis with and without ancillas, adapted from~\cite{Peham_2023} \label{fig:mapped}}
\end{figure*}
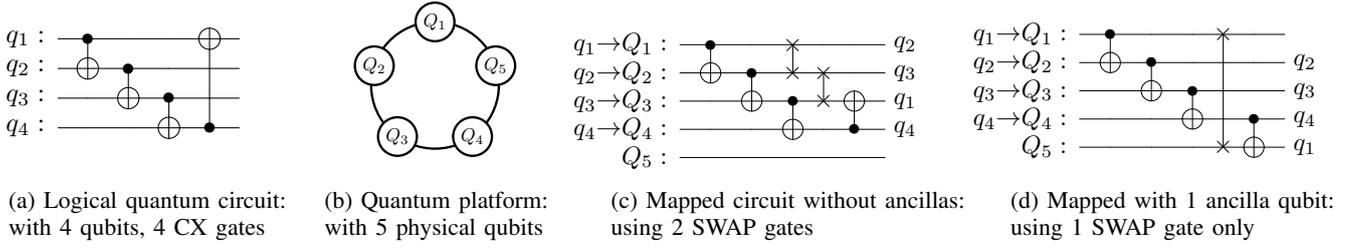

%\begin{figure*}[b!t]
%    \centering
    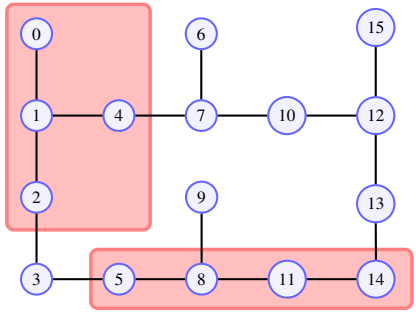
\begin{figure}[b]%{0.45\textwidth}
    \centering
    \scalebox{0.65}{
    \begin{tikzpicture}[
    circlenode/.style={circle, draw=blue!60, fill=blue!5, very thick},
    ]
    \begin{pgfonlayer}{background}
        \draw[region, red!50, fill=red!25] (-0.6,0.6)--(2.3,0.6)--(2.3,-4)--(-0.6,-4)--cycle; 
      \end{pgfonlayer}
    
      \begin{pgfonlayer}{background}
        \draw[region, red!50, fill=red!25] (1.1,-4.4)--(7.6,-4.4)--(7.6,-5.6)--(1.1,-5.6)--cycle; 
      \end{pgfonlayer}
    %Nodes
    \node[circlenode] (n0)                  {0};
    \node[circlenode] (n1)  [below=of n0]   {1};
    \node[circlenode] (n2)  [below=of n1]   {2};
    \node[circlenode] (n3)  [below=of n2]   {3};
    \node[circlenode] (n4)  [right=of n1]   {4};
    \node[circlenode] (n5)  [right=of n3]   {5};
    \node[circlenode] (n7)  [right=of n4]   {7};
    \node[circlenode] (n6)  [above=of n7]   {6};
    \node[circlenode] (n8)  [right=of n5]   {8};
    \node[circlenode] (n9)  [above=of n8]   {9};
    \node[circlenode] (n10) [right=of n7]   {10};
    \node[circlenode] (n11) [right=of n8]   {11};
    \node[circlenode] (n12) [right=of n10]  {12};
    \node[circlenode] (n13) [below=of n12]  {13};
    \node[circlenode] (n14) [right=of n11]  {14};
    \node[circlenode] (n15)  [above=of n12] {15};
        %Lines
    \draw[very thick, -] (n0.south) -- (n1.north);
    \draw[very thick, -] (n1.east) -- (n4.west);
    \draw[very thick, -] (n4.east) -- (n7.west);
    \draw[very thick, -] (n7.east) -- (n10.west);
    \draw[very thick, -] (n10.east) -- (n12.west);
    \draw[very thick, -] (n12.north) -- (n15.south);
    \draw[very thick, -] (n3.east) -- (n5.west);
    \draw[very thick, -] (n5.east) -- (n8.west);
    \draw[very thick, -] (n8.east) -- (n11.west);
    \draw[very thick, -] (n11.east) -- (n14.west);
    \draw[very thick, -] (n1.south) -- (n2.north);
    \draw[very thick, -] (n2.south) -- (n3.north);
    \draw[very thick, -] (n12.south) -- (n13.north);
    \draw[very thick, -] (n13.south) -- (n14.north);
    \draw[very thick, -] (n8.north) -- (n9.south);
    \draw[very thick, -] (n6.south) -- (n7.north);
    \end{tikzpicture}}
    \caption{IBM Guadalupe -- 16 qubits (average degree 2),\\
    with 2 maximal subarchitectures of 4 qubits
    \label{fig:guadalupe}}
    \end{figure}

    \begin{figure}[b]%{0.45\textwidth}
    \centering
    \scalebox{0.65}{
    \begin{tikzpicture}[
    circlenode/.style={circle, draw=blue!60, fill=blue!5, very thick},
    ]
    \begin{pgfonlayer}{background}
        \draw[region, red!50, fill=red!25] (2.8,-1.1)--(5.6,-1.1)--(5.6,-4.0)--(2.8,-4.0)--cycle; 
      \end{pgfonlayer}
    % Nodes
    \node[circlenode] (n0) {0};
    \node[circlenode] (n1) [right=of n0] {1};
    \draw[very thick, -] (n0.east) -- (n1.west);
    \node[circlenode] (n2) [right=of n1] {2};
    \draw[very thick, -] (n1.east) -- (n2.west);
    \node[circlenode] (n3) [right=of n2] {3};
    \draw[very thick, -] (n2.east) -- (n3.west);
    \node[circlenode] (n4) [right=of n3] {4};
    \draw[very thick, -] (n3.east) -- (n4.west);
    \node[circlenode] (n5) [below=of n0] {5};
    \draw[very thick, -] (n0.south) -- (n5.north);
    \node[circlenode] (n6) [below=of n1] {6};
    \draw[very thick, -] (n1.south) -- (n6.north);
    \draw[very thick, -] (n5.east) -- (n6.west);
    \node[circlenode] (n7) [below=of n2] {7};
    \draw[very thick, -] (n2.south) -- (n7.north);
    \draw[very thick, -] (n6.east) -- (n7.west);
    \node[circlenode] (n8) [below=of n3] {8};
    \draw[very thick, -] (n3.south) -- (n8.north);
    \draw[very thick, -] (n7.east) -- (n8.west);
    \node[circlenode] (n9) [below=of n4] {9};
    \draw[very thick, -] (n4.south) -- (n9.north);
    \draw[very thick, -] (n8.east) -- (n9.west);
    \node[circlenode] (n10) [below=of n5] {10};
    \draw[very thick, -] (n5.south) -- (n10.north);
    \node[circlenode] (n11) [below=of n6] {11};
    \draw[very thick, -] (n6.south) -- (n11.north);
    \draw[very thick, -] (n10.east) -- (n11.west);
    \node[circlenode] (n12) [below=of n7] {12};
    \draw[very thick, -] (n7.south) -- (n12.north);
    \draw[very thick, -] (n11.east) -- (n12.west);
    \node[circlenode] (n13) [below=of n8] {13};
    \draw[very thick, -] (n8.south) -- (n13.north);
    \draw[very thick, -] (n12.east) -- (n13.west);
    \node[circlenode] (n14) [below=of n9] {14};
    %\draw[very thick, -] (n9.south) -- (n14.north);
    \draw[very thick, -] (n13.east) -- (n14.west);
    \node[circlenode] (n15) [below=of n10] {15};
    \draw[very thick, -] (n10.south) -- (n15.north);
    \node[circlenode] (n16) [below=of n11] {16};
    \draw[very thick, -] (n11.south) -- (n16.north);
    \draw[very thick, -] (n15.east) -- (n16.west);
    \node[circlenode] (n17) [below=of n12] {17};
    %\draw[very thick, -] (n12.south) -- (n17.north);
    \draw[very thick, -] (n16.east) -- (n17.west);
    \node[circlenode] (n18) [below=of n13] {18};
    \draw[very thick, -] (n13.south) -- (n18.north);
    %\draw[very thick, -] (n17.east) -- (n18.west);
    \node[circlenode] (n19) [below=of n14] {19};
    \draw[very thick, -] (n14.south) -- (n19.north);
    %\draw[very thick, -] (n18.east) -- (n19.west);
    
    \draw[very thick, -] (n3) -- (n9);
    \draw[very thick, -] (n4) -- (n8);
    \draw[very thick, -] (n5) -- (n11);
    \draw[very thick, -] (n6) -- (n10);
    \draw[very thick, -] (n11) -- (n17);
    \draw[very thick, -] (n12) -- (n16);
    \draw[very thick, -] (n13) -- (n19);
    \draw[very thick, -] (n14) -- (n18);
    \draw[very thick, -] (n7) -- (n13);
    \draw[very thick, -] (n8) -- (n12);
    \end{tikzpicture}}
    \caption{IBM Tokyo -- 20 qubits (average degree 3.7),\\
    with 1 maximal subarchitecture of 4 qubits.
    \label{fig:tokyo}}
    \end{figure}
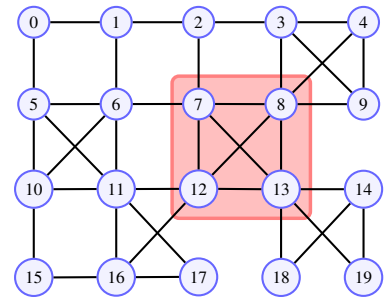
    %\caption{Coupling graph of two example quantum platforms; highlighting their unique maximal connected 4-subgraphs\label{fig:platforms}}
    %\end{figure*}

\paragraph*{Quantum Platform}
To execute a quantum circuit on a platform, the $n$ logical qubits are allocated to 
$p$ physical qubits $P=\{Q_1,\ldots,Q_p\}$. An allocation is an injective mapping 
$\alpha: L\to P$ from logical to physical qubits. Many (superconducting) quantum platforms
support $\CX$-gates on a subset of physical qubits only. This is specified in the 
\emph{coupling graph} $(P,C)$. Fig.~\ref{fig:exampleb} shows a coupling graph on 5 qubits, connected in a cycle.
Executing gate $\CX(Q_i,Q_j)$ is only \emph{feasible} if the physical qubits are connected, i.e., $Q_i \,C\, Q_j$.

\paragraph*{Quantum Layout Synthesis}
Often, there is no single $\alpha$ that makes all $\CX$-gates feasible. 
In this case, the qubits must be rerouted by inserting binary SWAP gates in the circuit.
The effect of $\SWAP(Q_i,Q_j)$ on physical qubits is defined by:
\[\pi_{ij}(Q_i)=Q_j,\quad \pi_{ij}(Q_j)=Q_i,\quad \pi_{ij}(Q_k)=Q_k \quad (\forall k\neq i,j)\]

We can now define the \emph{Quantum Layout Problem}: 
it takes as \textbf{input} a quantum circuit $\Gamma$ on logical qubits $L$ and a coupling graph $(P,C)$ on physical qubits $P$.
The \textbf{output} is a quantum circuit $\Gamma'$ on $P$, possibly with SWAP-gates, and an initial qubit allocation $\alpha:L\to P$.
The two correctness criteria are:

%\begin{itemize}
%\item 
\textbf{Equivalence:} If the mapped circuit is ``unmapped'', we obtain the input circuit,%
\footnote{This equivalence can be relaxed, by allowing permutations of independent or even commuting gates.
The results in this paper also apply under relaxations.}
i.e.\ $um(\Gamma', \alpha) = \Gamma$ (defined below).

%\item 
\textbf{Feasibility:} All $\CX(Q_1,Q_2)$- and $\SWAP(Q_1,Q_2)$-gates in $\Gamma'$ are on connected qubits, i.e.\ $Q_1\,C\,Q_2$.
%\end{itemize}

When unmapping the physical circuit, we must reverse the initial allocation $\alpha$. Since $\alpha:L\to P$ is an injective function, its inverse $\overline{\alpha}:P\to (L\cup\{\bot\})$ is uniquely defined. 
Also, the effect of SWAP-gates must be ``undone''.
For instance, in Fig.~\ref{fig:exampled}, initially we have $\overline{\alpha}(Q_1)=q_1$ and $\overline{\alpha}(Q_5)=\bot$. After the displayed $\SWAP(Q_1,Q_5)$, we obtain the qubit allocation $\alpha' = \pi_{1,5}\circ\alpha$. 
Its inverse gives $\overline{\alpha'}(Q_1)=\bot$ and
$\overline{\alpha'}(Q_5)=q_5$.

We formally define $um(\Gamma,\alpha)$, the ``unmap''-function by:
\begin{align*} 
    um(U(Q);\Gamma,\alpha) & = U(\overline{\alpha}(Q)); um(\Gamma,\alpha) \\
    um(\CX(Q_i,Q_j);\Gamma, \alpha) & = \CX(\overline{\alpha}(Q_i),\overline{\alpha}(Q_j));um(\Gamma,\alpha) \\
    um(\SWAP(Q_i,Q_j);\Gamma, \alpha) & = um(\Gamma,\pi_{ij}\circ\alpha)
\end{align*}

\paragraph*{Optimal Layout Synthesis}
Inserting SWAP-gates is necessary to execute the circuit on the platform, but it is quite costly: 
Usually, a SWAP gate is implemented by 3 $\CX$-gates:
$\SWAP(q_i,q_j) = \CX(q_i,q_j);\CX(q_j,q_i);\CX(q_i,q_j)$.
This considerably increases the time and noise, so in \emph{Optimal} Quantum Layout Synthesis,
the \textbf{objective} is to insert the minimal number of SWAP-gates to obtain an equivalent and feasible solution.%
\footnote{Another popular objective is to minimize the depth of the quantum circuit.
The implementation in {Q-Synth v4} is fully compatible with SAT-based depth-optimal 
Quantum Layout Synthesis from~\cite{SAT25}.
}

\section{Approach based on Subarchitectures}
As mentioned in the introduction, optimal quantum layout synthesis is NP-complete, which makes it unfeasible for large inputs. 
One way to remedy this is to try mapping a circuit of $n$ nodes onto a subarchitecture, which is simply defined as an induced subgraph
$(P',C')\sqsubseteq (P,C)$ of the full platform. This introduces a new problem, since there are too many subgraphs to consider. 
We propose to restrict attention to subgraphs of $n+k$ nodes, where $k$ is a fixed number of ancilla qubits. 
To further restrict the candidates, we only consider \emph{connected} subgraphs of size $n+k$ that
are \emph{maximal} in the subgraph isomorphism ordering.
See Fig.~\ref{fig:guadalupe},~\ref{fig:tokyo} for two example quantum platforms with 16 resp.\ 20 qubits. 
These graphs have many 4-subgraphs (cf.\ Table~\ref{tab:subarchitectures}), but there are only 2 resp.\ 1 
connected 4-subgraphs that are maximal w.r.t.\ subgraph isomorphism.

The following theorem ensures that we do not miss the optimal solution \emph{among the subarchitectures with $n+k$ qubits}. 

\begin{theorem}
Assume quantum circuit $\Gamma$ on logical qubits $L$ can be mapped onto a platform with coupling graph $(P',C')$ using $\ell$ SWAPs.
If $(P',C')$ is subgraph isomorphic with $(P,C)$, then $\Gamma$ can also be mapped onto $(P,C)$ using $\ell$ swaps.
\end{theorem}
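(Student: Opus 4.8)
The plan is to transport the given solution for $(P',C')$ to $(P,C)$ along the subgraph embedding. By definition of subgraph isomorphism there is a subgraph $(V'',E'')\sqsubseteq(P,C)$ and a bijection $h:P'\to V''$ with $Q\,C'\,Q'\iff h(Q)\,E''\,h(Q')$; since $E''\subseteq C$, this makes $h:P'\to P$ an injective, edge-preserving map (the forward direction is all we will use). Let $\Gamma'$ on $P'$ with initial allocation $\alpha:L\to P'$ be the assumed $\ell$-SWAP solution. I would define the candidate solution on $P$ by relabelling: $\alpha^*:=h\circ\alpha$ (injective, as a composition of injections), and $\Gamma^*$ obtained from $\Gamma'$ by replacing every $U(Q)$ with $U(h(Q))$, every $\CX(Q,Q')$ with $\CX(h(Q),h(Q'))$, and every $\SWAP(Q,Q')$ with $\SWAP(h(Q),h(Q'))$. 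By construction $\Gamma^*$ contains exactly the same number $\ell$ of SWAP-gates as $\Gamma'$, and the physical qubits in $P\setminus V''$ are simply never used.

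Feasibility of $\Gamma^*$ on $(P,C)$ is immediate: since $\Gamma'$ is feasible on $(P',C')$, each of its $\CX$- and $\SWAP$-gates acts on some pair $Q\,C'\,Q'$, and then the corresponding gate of $\Gamma^*$ acts on $h(Q)\,C\,h(Q')$.

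For equivalence I would prove the stronger statement $um(\Gamma^*,h\circ\alpha)=um(\Gamma',\alpha)$ for \emph{every} circuit $\Gamma'$ on $P'$ and \emph{every} allocation $\alpha:L\to P'$, by induction on the length of $\Gamma'$, following the three defining equations of $um$ line by line. Two ingredients are needed: (i) $\overline{h\circ\alpha}(h(Q))=\overline{\alpha}(Q)$ for all $Q\in P'$, which is immediate from injectivity of $h$ after treating the $\bot$ case separately; and (ii) the commutation identity $h\circ\pi_{ij}=\pi_{i'j'}\circ h$ as maps $P'\to P$, where $Q_{i'}=h(Q_i)$ and $Q_{j'}=h(Q_j)$, which follows by a three-case check on whether the argument is $Q_i$, $Q_j$, or neither (using injectivity of $h$ in the last case). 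Ingredient (i) discharges the unary and $\CX$ steps directly, and (ii) makes the SWAP step match: $um(\SWAP(Q_i,Q_j);\Delta,\alpha)=um(\Delta,\pi_{ij}\circ\alpha)$ on one side, and $um(\SWAP(Q_{i'},Q_{j'});\Delta^*,h\circ\alpha)=um(\Delta^*,\pi_{i'j'}\circ h\circ\alpha)=um(\Delta^*,h\circ(\pi_{ij}\circ\alpha))$ on the other, so the induction hypothesis applies to $\Delta$ with the updated allocation $\pi_{ij}\circ\alpha$. Instantiating the general claim at the original $\alpha$ yields $um(\Gamma^*,\alpha^*)=um(\Gamma',\alpha)=\Gamma$, so $(\Gamma^*,\alpha^*)$ is an equivalent, feasible mapping of $\Gamma$ onto $(P,C)$ using $\ell$ SWAPs, as required.

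The whole argument is essentially bookkeeping. The one place that genuinely needs care is the SWAP case of the induction: getting the commutation identity (ii) right, and stating the induction hypothesis for arbitrary allocations rather than the fixed $\alpha$, since the allocation evolves as SWAPs are processed. The other minor point is that $h$ need not be surjective and $\overline{\alpha^*}$ takes the value $\bot$ on $P\setminus V''$; this is harmless precisely because those extra qubits never occur in $\Gamma^*$, but it should be noted explicitly when verifying ingredient (i).
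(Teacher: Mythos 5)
Your proposal is correct and follows essentially the same route as the paper's proof: transport the solution along the embedding $h$ given by the subgraph isomorphism, take $h\circ\alpha$ as the new allocation, and verify injectivity, feasibility, and equivalence, the last by induction on the circuit. You merely spell out the details the paper leaves implicit (the generalized induction hypothesis over allocations and the commutation identity $h\circ\pi_{ij}=\pi_{i'j'}\circ h$ for the SWAP case), which is a faithful elaboration rather than a different argument.
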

\begin{proof}
Let $\alpha:L\to P'$ be the initial allocation for an equivalent and feasible mapping of $\Gamma$ to a physical quantum circuit $\Gamma'$ with $\ell$ SWAPs onto $(P',C')$.
From the subgraph isomorphism, we obtain 
$(P',C')\simeq (P'',C'')\sqsubseteq (P,C)$. 
Assume $h:P'\to P''$ is an isomorphism between $(P',C')$ and $(P'',C'')$.
We claim that $h\circ\alpha:L\to P$ is an initial allocation, which yields an equivalent and feasible
circuit $h(\Gamma')$ on $(P,C)$.
\begin{itemize}
\item $\alpha:L\to P'$ is injective and so is $h$, so also $h\circ\alpha:L\to P''$ is injective. $P''\subseteq P$, so $h\circ\alpha:L\to P$ is injective too.
\item By induction on $\Gamma'$, we obtain $um(h(\Gamma'),h\circ\alpha) = um(\Gamma',\alpha) = \Gamma$.
Hence, $h(\Gamma')$ is equivalent to $\Gamma$.
\item All binary gates on $Q_1, Q_2$ in $\Gamma'$ satisfy $Q_1\,C\,Q_2$, so $h(Q_1)\,C''\,h(Q_2)$ by isomorphism, so $h(Q_1)\, C\, h(Q_2)$ since $C''\subseteq C$.
Hence, $h(\Gamma')$ is feasible on $(P,C)$.
\end{itemize}
\end{proof}

The reverse is not true, as shown in Fig.~\ref{fig:mapped}:
Mapping the logical circuit on the cyclic platform of Fig.~\ref{fig:exampleb}, 
needs 2 SWAPs when only 4 physical qubits are used (Fig.~\ref{fig:examplec}).
With ancillas (Fig.~\ref{fig:exampled}), 1 SWAP is sufficient,
now using 5 qubits.
We stress that computing an optimal solution on $n+k$ qubits might be suboptimal when considering more than $k$ ancillas.
To find out, one has to increase $k$ and repeat the procedure.

\subsection{Algorithm for finding maximal subarchitectures}
We have reduced our task to enumerating all \emph{maximal} subgraphs of a graph $G=(V,E)$ of a fixed size $k$.
We wish to minimize the number of expensive subgraph isomorphism checks. 
From now on, we assume (without loss of generality%
\footnote{If the coupling map is not connected, one can always decompose the OLS problem to smaller problems on maximal connected subarchitectures.}) 
that the coupling graph is connected.
The coupling map of existing quantum platforms is always connected and moreover, it is usually (close to) planar and has a low degree.

\noindent We then proceed in three steps, see Alg.~\ref{alg:subarchitectures}:
\begin{enumerate}
    \item We enumerate all \emph{connected} subgraphs of size $k$,
    \item We remove isomorphic duplicates, using graph-hashing,
    \item We test the remaining pairs for subgraph isomorphism.
\end{enumerate}

\begin{algorithm}[t]
\caption{Generate all maximal, connected $k$-subgraphs
\label{alg:subarchitectures}}
\begin{algorithmic}[1]
\Require Coupling graph $(P,C)$, size of subarchitectures $k$
\Ensure Returns the set of all maximal, connected $k$-subgraphs
\Procedure{MaxSubArch}{$P$,$C$,$k$}
\State $\mathit{MaxGraphs} \gets \emptyset$
\State $\mathit{Hashes} \gets \lambda h.\emptyset$ \Comment{non-isomorphic graphs with hash $h$}
\ForEach{$G \in \textsc{ConnectedSubGraphs}((P,C),k)$}
\State $h \gets \textsc{GraphHash}(G)$ \label{line:hash1}
\If{$\forall G'\in \mathit{Hashes}[h].\ G\not\simeq G'$}
    \State $\mathit{Hashes}[h] \gets \mathit{Hashes}[h]\cup \{G\}$ \label{line:hash2}
    \State $\mathit{candidate}\gets True$
    \ForEach{$G' \in \mathit{MaxGraphs}$} \label{line:max1}
    \If{$\textsc{subgraphIso}(G,G')$}
        \State{$\mathit{candidate}\gets False$}
        \State \textbf{break} \Comment{$G$ is not maximal}
    \ElsIf{$\textsc{subgraphIso}(G',G)$}
        \State $\mathit{MaxGraphs} \gets \mathit{MaxGraphs}\setminus \{G'\}$ \label{line:max2}
    \EndIf
    \EndForEach
    \If{$\mathit{candidate}$}
        \State $\mathit{MaxGraphs} \gets \mathit{MaxGraphs}\cup\{G\}$
    \EndIf
\EndIf
\EndForEach
\State \Return $\mathit{MaxGraphs}$
\EndProcedure
\end{algorithmic}
\end{algorithm}

\subsubsection{Enumerating all connected $k$-subgraphs}
\label{sec:connected}
A na\"ive algorithm to enumerate all connected induced subgraphs of size $k$ in $G=(V,E)$ would be 
to enumerate all $k$-subsets of $V$ and test the induced graphs for connectedness.
A sophisticated algorithm proposed in~\cite{Karakashian_2013} has the same worst-case time complexity, 
but it is more efficient for graphs with low degree, like most (superconducting) quantum platforms. 

Assume that $T$ is a {\em tree} with root $A$ whose children are $A_1,\ldots,A_\ell$. 
The set of connected $k$-subgraphs rooted at $A$ can be obtained as combinations of $\{A\}$ 
with the sets of $k_i$-connected subgraphs rooted at $A_i$, for any $k_i+\cdots+k_\ell = k-1$. 
The algorithm in~\cite{Karakashian_2013} generalizes this idea to the much more complicated 
case where $T$ is a spanning tree of an arbitrary graph $G$. 
We have implemented their algorithm, and evaluated it in our experiments.

\subsubsection{Graph-hashing as isomorphism filter}
\label{sec:isomorphic}
We use the Weisfeiler-Lehman graph-hashing technique~\cite{Shervashidze_2011}. 
This guarantees that isomorphic graphs get the same hash. Given non-isomorphic graphs, 
there is a small probability of hash collision. However, this probability can be tuned by a parameter.

In our implementation, we adapted the graph-hashing code from networkx based on~\cite{Shervashidze_2011}.
Instead of checking isomorphism between any two candidates, we compute the graph-hash of each candidate
(Alg.~\ref{alg:subarchitectures}, ll.~\ref{line:hash1}-\ref{line:hash2}), and check isomorphism only between candidates with the same hash.%
\footnote{Turning off this isomorphism check introduces a small chance that we miss some maximal subarchitectures. We observed this only once in our experiments.}
This reduces the number of isomorphism checks considerably.

\subsubsection{Subgraph Isomorphism Checks}
\label{sec:maximal}
At this point we have a list of non-isomorphic connected subgraphs of size $k$. We maintain a set of the maximal 
subgraph isomorphic ones seen so far, and compare every new subgraph with the previous maximal ones, keeping the new maximal ones only (Alg.~\ref{alg:subarchitectures}, ll.~\ref{line:max1}-\ref{line:max2}). This results
in $O(c^2)$ subgraph isomorphism checks, where $c$ is the number of non-isomorphic candidates. In our
implementation, we use the VF2 subgraph isomorphism check in rustworkx~\cite{rustworkx}, based on \cite{Cordella_2004}.

\subsection{Algorithm for layout synthesis with subarchitectures}

\begin{algorithm}[t]
    \caption{Mapping Strategy with Subarchitectures
    \label{alg:mapping}}
    \begin{algorithmic}[1]
        \Require Quantum Platform with coupling graph $(P,C)$
        \Require Logical circuit $\Gamma$ with $n\leq|P|$ qubits
        \Require A correct and optimal layout mapping function $\textsc{Map}$
    \Ensure Returns a correct and optimal layout mapping
    \Procedure{MapWithSubarch}{$(P,C)$, $\Gamma$}
    \State $\mathit{bound} \gets \infty$ \label{line:inf}
    \For{$k = n$ \textbf{to} $|P|$}
    \ForEach{$(P',C') \in \textsc{MaxSubArch}(P,C,k)$}
        \State \textbf{try} $(\Gamma',\alpha,S) \gets \textsc{Map}(\Gamma,(P',C'),\mathit{bound})$ \label{line:map}
        \If{Success}
            \If{$S=0$} \Return $(\Gamma',\alpha,S)$
            \Else \mbox{ }$\mathit{bound} \gets S-1$ \label{line:bound}
            \EndIf
        \EndIf
    \EndForEach
    \EndFor
    \State \Return the last successful $(\Gamma',\alpha,S)$
    \EndProcedure
    \end{algorithmic}
    \end{algorithm}

We now describe how any layout mapping procedure can be enhanced by using maximal subarchitectures (Alg.~\ref{alg:mapping}).
We assume that we have a blackbox function $\textsc{Map}(\Gamma,(P,C),b)$, which tries to find an optimal mapping of $\Gamma$ on $(P,C)$ that requires at most $b$ extra SWAP gates. If
successful, the procedure returns the triple $(\Gamma',\alpha,S)$, where $\Gamma'$ is
a correctly mapped circuit, $\alpha$ is the initial mapping, and $S$ indicates the minimum
number of SWAP gates required. Alg.~\ref{alg:mapping} provides a drop-in replacement
of $\textsc{Map}$, which is still  correct and optimal, but potentially more efficient
by using maximal subarchitectures.

Alg.~\ref{alg:mapping}, l.~\ref{line:map} tries to map $\Gamma$ on maximal subarchitectures $(P',C')$ with an increasing
number of ancillary qubits $k-n$. When it finds a solution with $S$ SWAP gates, it will
only try to find better ones, by setting the bound to $S-1$ (l.~\ref{line:bound}). This reduces the search time required for future calls to $\textsc{Map}$. 
In practice, one could quickly obtain an initial upperbound (instead of l.~\ref{line:inf}) by running a heuristic 
tool; we did not do so in our experiments.

The strategy of Alg.~\ref{alg:mapping} does not only guarantee
correctness and optimality, but it also returns a circuit with the minimal number
of ancillaries that allow a correct and optimal solution.
In practice, one would probably only try 0, 1 or 2 ancillaries, and break the outer
loop earlier. This would not guarantee the global optimal solution, but at least,
it guarantees the best solution with 0, 1 or 2 ancillaries.
    
\section{Experiments}

We carry out three experiments, evaluating our algorithm, and evaluating the improvement
it brings to the tools MQT QMAP (using subarchitecture coverings) and 
Q-Synth (using SAT to the full architecture).
All experiments are carried out on a single core of a 
48-core compute server with Intel Xeon Gold 6248R processors (3.0 GHz) and 384 GB internal memory.

\begin{table*}[tbh]
    \caption{Subarchitectures of size $k$ from quantum platforms with $|P|$ qubits.
    We show the number of all $\binom{|P|}{k}$ subarchitectures, and compute the connected, non-isomorphic, and 
    maximal ones w.r.t. $\sqsubseteq$.
    We also report the running times to compute the connected, non-isomorphic, and 
    maximal ones, and the total running time (in seconds).
    \label{tab:subarchitectures}}
    \begin{center}
    \begin{tabular}{|lrr||r|rrr||rrr|r|}
        \hline
        Platform & $|P|$ & $k$ & All Subsets & Connected & NonIso & Max & Connected (s) & NonIso (s) & Max (s) & Total (s)\\
        \hline\hline
        guadalupe & 16 & 4 & 1820 & 24 & 2 & 2
        & 0.0057 & 0.0027 & 0.0000 & 0.0084\\
        guadalupe & 16 & 8 & 12870 & 55 & 5 & 5
        & 0.0062 & 0.0095 & 0.0002 & 0.0159\\
        guadalupe & 16 & 12 & 1820 & 109 & 16 & 15
        & 0.0180 & 0.0283 & 0.0020 & 0.0483\\
        guadalupe & 16 & 16 & 1 & 1 & 1 & 1
        & 0.0204 & 0.0004 & 0.0000 & 0.0209\\
        \hline
        rigetti-16 & 16 & 4 & 1820 & 35 & 3 & 2
        & 0.0035 & 0.0032 & 0.0000 & 0.0068\\
        rigetti-16 & 16 & 8 & 12870 & 135 & 14 & 9
        & 0.0123 & 0.0239 & 0.0008 & 0.0370\\
        rigetti-16 & 16 & 12 & 1820 & 149 & 30 & 16
        & 0.0429 & 0.0380 & 0.0052 & 0.0862\\
        rigetti-16 & 16 & 16 & 1 & 1 & 1 & 1
        & 0.0522 & 0.0004 & 0.0000 & 0.0526\\
        \hline
        tokyo & 20 & 4 & 4845 & 179 & 6 & 1
        & 0.0087 & 0.0163 & 0.0001 & 0.0251\\
        tokyo & 20 & 8 & 125970 & 3883 & 207 & 18
        & 0.4077 & 0.7192 & 0.0366 & 1.1635\\
        tokyo & 20 & 12 & 125970 & 12402 & 2667 & 131
        & 11.3097 & 3.4248 & 5.1231 & 19.8577\\
        tokyo & 20 & 16 & 4845 & 1951 & 990 & 91
        & 269.0086 & 0.7124 & 1.6968 & 271.4178\\
        \hline
        sycamore & 53 & 4 & 292825 & 613 & 3 & 2
        & 0.0199 & 0.0535 & 0.0001 & 0.0734\\
        sycamore & 53 & 8 & $8.86{\times}10^{8}$ & 44226 & 51 & 9
        & 2.1086 & 7.8243 & 0.0047 & 9.9377\\
        sycamore & 53 & 12 & $2.67{\times}10^{11}$ & 3350459 & 2042 & 111
        & 398.6534 & 879.3076 & 7.2249 & 1285.1859\\
        sycamore & 53 & 16 & $1.48{\times}10^{13}$ & TO & TO & TO
        & TO & TO & TO & TO\\
        \hline
        rigetti-80 & 80 & 4 & 1581580 & 343 & 3 & 2
        & 0.0175 & 0.0309 & 0.0000 & 0.0485\\
        rigetti-80 & 80 & 8 & $2.90{\times}10^{10}$ & 6479 & 24 & 12
        & 0.2784 & 1.1118 & 0.0019 & 1.3921\\
        rigetti-80 & 80 & 12 & $6.02{\times}10^{13}$ & 140094 & 397 & 99
        & 7.2202 & 36.0356 & 0.4309 & 43.6867\\
        rigetti-80 & 80 & 16 & $2.70{\times}10^{16}$ & 2941151 & 8417 & 1034
        & 336.6784 & 1012.9459 & 235.7125 & 1585.3368\\
        \hline
        eagle & 127 & 4 & 10334625 & 278 & 2 & 2
        & 0.1424 & 0.0252 & 0.0000 & 0.1676\\
        eagle & 127 & 8 & $1.34{\times}10^{12}$ & 2100 & 8 & 4
        & 0.0957 & 0.3604 & 0.0008 & 0.4569\\
        eagle & 127 & 12 & $2.15{\times}10^{16}$ & 23446 & 54 & 19
        & 0.9733 & 5.9831 & 0.0239 & 6.9802\\
        eagle & 127 & 16 & $8.17{\times}10^{19}$ & 290551 & 412 & 140
        & 12.3763 & 98.4724 & 1.7108 & 112.5595\\
        \hline
    \end{tabular}
    \end{center}
    \end{table*}

\subsection{Effectiveness of computing maximal subarchitectures}
\label{sec:experiment1}

In the first experiment, we focus on Alg.~\ref{alg:subarchitectures} for computing maximal subarchitectures.
In particular, we evaluate the reduction in the number of subarchitectures obtained by the pipeline in our algorithm.
We took the coupling graphs of six well-known quantum platforms:
IBM Guadalupe (16 qubits), IBM Tokyo (20 qubits), IBM Eagle (127 qubits), Rigetti (16 and 80 qubits), 
and Google's Sycamore (53 qubits).
We first calculate the number of subsets of $k$ qubits from a platform with $|P|$ qubits simply
as $\binom{|P|}{k}$, where $k$ ranges over $\{4,8,12,16\}$.

Then we compute all connected subgraphs (Sec.~\ref{sec:connected}), using our own implementation in Python.
Subsequently, we filter the non-isomorphic ones using graph hashing (Sec.~\ref{sec:isomorphic}).
Finally, we use subgraph isomorphism checks to reduce this to only the maximal candidates (Sec.~\ref{sec:maximal}).

From the results in Tab.~\ref{tab:subarchitectures}, it is clear that enumerating all subsets 
(as is done in MQT QMAP with coverings) becomes quickly unfeasible. On the other hand, we
succeeded to generate the connected subgraphs for nearly all cases. The results also show the considerable reduction achieved by
eliminating isomorphic graphs using graph-hashing (column NonIso). Finally, the number of maximal candidates considered
for the mapping is quite manageable in most cases. We conclude that all steps of Alg.~\ref{alg:subarchitectures}
contribute to filtering relevant subarchitectures.

The rightmost columns of Tab.~\ref{tab:subarchitectures} show the running time of the various phases.
Most of the calculations finish within a fraction of a second, while two of them take up to half an hour.
Only calculating the 16-qubit connected subgraphs on Sycamore (53) timed out after 3 hours.
Note that the numbers and running time are not only determined by the size of the graph,
but also by the topology: IBM Guadalupe (Fig.~\ref{fig:guadalupe}) is sparser than IBM Tokyo (Fig.~\ref{fig:tokyo}), and indeed, for Tokyo the number of candidates and the running time increase faster.

\subsection{Comparison with subarchitecture coverings in MQT QMAP}

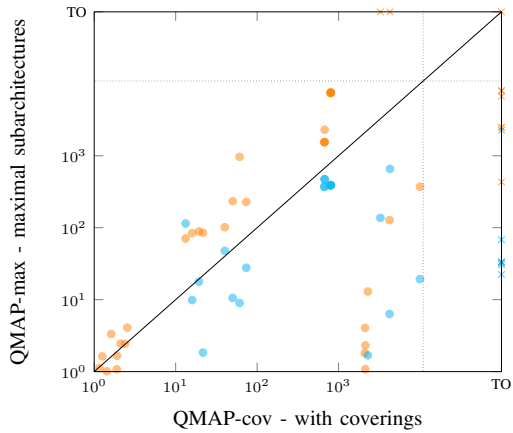
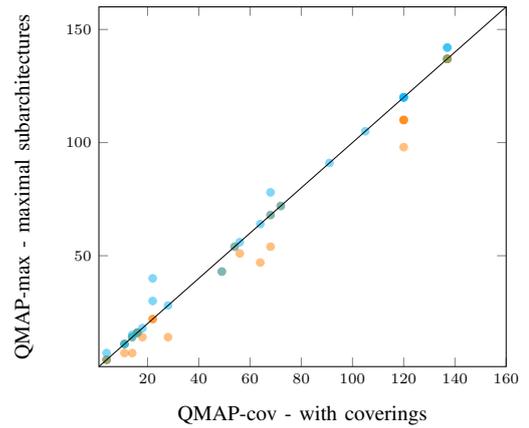
\begin{figure*}[htbp!]
    \centering
      \subfloat[Running Time (s)] {
        \label{fig:mqt_4:time}
        \centering
        \scalebox{0.8}{
        \begin{tikzpicture}
          \begin{axis}[%
            width=0.45\linewidth, height=0.41\linewidth,
            every tick label/.append style={font=\scriptsize},
            % x-axis
            xlabel={QMAP-cov - with coverings},
            xmin=1,
            xmax=100000,
            xtick={1,10,100,1000,100000},
            xticklabels = {$10^{0}$, $10^{1}$,$10^{2}$,$10^{3}$,TO},
            xmode=log,
            % y-axis
            ylabel={QMAP-max - maximal subarchitectures},
            ymin=1,
            ymax=100000,
            ytick = {1,10,100,1000,100000},
            yticklabels = {$10^{0}$,$10^{1}$,$10^{2}$,$10^{3}$,TO},
            ymode=log,
            % grid
            grid style={dashed,black!12},
            ]
    
            % X=Y line
            \addplot[domain=0.001:100000, samples=8, color=black]
            {x};
    
            % draw timeout (10800) lines (vertical & horizontal)
            \draw[densely dotted, opacity=0.4] (9.3,100) -- (9.3,-100.0);
            \draw[densely dotted, opacity=0.4] (100,9.3) -- (-100.0,9.3);
    
            \begin{scope}[blend mode=soft light]
              \addplot+ [forget plot, style=dots_cblue] table {./figures/data/mqt40.tex};
              \addplot+ [forget plot, style=x_cblue] table {./figures/data/mqt40_timeout.tex};
              \addplot+ [forget plot, style=dots_corange] table {./figures/data/mqt41.tex};
              \addplot+ [forget plot, style=x_corange] table {./figures/data/mqt41_timeout.tex};
            \end{scope}
          \end{axis}
        \end{tikzpicture}}
      }
      \qquad\qquad
      \subfloat[Additional Gates] {
        \label{fig:mqt_4:gates}
        \centering
        \scalebox{0.8}{
        \begin{tikzpicture}
          \begin{axis}[%
            width=0.45\linewidth, height=0.41\linewidth,
            every tick label/.append style={font=\scriptsize},
            % x-axis
            xlabel={QMAP-cov - with coverings},
            xmin=1,
            xmax=160,
            % y-axis
            ylabel={QMAP-max - maximal subarchitectures},
            ymin=1,
            ymax=160,
            % grid
            grid style={dashed,black!12},
            ]
    
            % X=Y line
            \addplot[domain=0.001:160, samples=8, color=black]
            {x};
    
            \begin{scope}[blend mode=soft light]
              \addplot+ [forget plot, style=dots_cblue] table {./figures/data/mqt40_gates.tex};
              \addplot+ [forget plot, style=dots_corange] table {./figures/data/mqt41_gates.tex};
            \end{scope}
          \end{axis}
    
        \end{tikzpicture}}
      }
    
      %\medskip
    
      \caption{MQT QMAP, mapping to \emph{coverings of size 4} versus \emph{maximal subarchitectures},
      using \tikzdot{cblue} 0 or\tikzdot{corange} 1 ancillary qubits.\vspace{-0.8em}}
      \label{fig:mqt_4}
    \end{figure*}

In the second experiment, we compare the effect of using \emph{maximal subarchitectures} as computed by Alg.~\ref{alg:subarchitectures} in MQT QMAP~\cite{Qmap}.
MQT QMAP uses SMT solving to find an optimal mapping of a quantum circuit on a quantum architecture~\cite{Wille_2019}. It comes with an implementation of \emph{subarchitecture coverings} as described in~\cite{Peham_2023}, which we refer to as \emph{QMAP-cov}.
Here the user supplies a number of desired subarchitectures that together ``cover'' a larger set of connected subarchitectures. QMAP-cov will (i) generate all subgraphs of the coupling graph; (ii) filter the non-isomorphic connected subgraphs; (iii) combine them into a set of larger subarchitectures with the desired cardinality using subgraph-isomorphism checks; and (iv) map the quantum circuit to each of those ``covering subarchitectures''.

In our modification of QMAP, which we call \emph{QMAP-max}, we replace step (i)-(iii) above by computing maximal subarchitectures of size $k$, resp.\ $k+1$ (corresponding to 0 and 1 ancillaries), following the steps in Alg.~\ref{alg:subarchitectures}.

We evaluate both methods on the same quantum circuits as used in~\cite{Peham_2023}, obtained by selecting all 5-qubit benchmarks from MQT BENCH~\cite{MQTbench23}.%
\footnote{We selected all scalable benchmarks with 5 qubits for IBM's native gates in Qiskit format optimization level 0.}
We run QMAP-cov with coverings of size 1, 2, 4 and 8, and we run QMAP-max with maximal subarchitectures using 0 or 1 ancilla qubits. We compare both procedures on running time and on the quality of the mapped circuits, i.e.\ the smallest number of added gates. The size of the coverings doesn't change the overall picture, so we only present the results for coverings of size 4.

We measure the running time in seconds of the original QMAP-cov (using coverings), and our modified QMAP-max (maximal subarchitectures) on all benchmarks, Fig.~\ref{fig:mqt_4:time}. 
The running time includes computing coverings / subarchitectures and the time needed for mapping by QMAP.
We also indicate the number gates inserted during layout mapping, Fig.~\ref{fig:mqt_4:gates}. 

The results indicate that using maximal subarchitectures scales better than using coverings (manifesting as fewer timeouts for QMAP-max). 
The results also show a clear trade-off: using maximal subarchitectures without ancillas (blue dots) is usually faster than using coverings. However, the resulting mapped circuits introduce a few more gates 
(blue dots in Fig.~\ref{fig:mqt_4:gates}).
On the other hand, using maximal subarchitectures with 1 ancilla (orange dots) is often slower than using coverings (although sometimes a lot faster, since computing coverings on the relatively dense Tokyo platform is expensive).
But the circuits mapped with 1 ancilla have fewer extra gates (orange dots in Fig.~\ref{fig:mqt_4:gates}). 
This trade-off is confirmed by the cumulative summary in Table~\ref{tab:cumcover} over all circuits that didn't time out; the underlying data is available in~\cite{KostyantinBSc}.
It shows the minimum, average and maximum runtime improvement and gate improvement over all solved benchmarks when using maximal subarchitectures instead of coverings.

To illustrate the trade-off on concrete larger examples, when mapping the quantum circuit PortfolioVQE (a Variational Quantum Eigensolver used in quantum chemistry) on Rigetti-16, \emph{QMAP-cov} introduces 120 extra gates using 804s, while \emph{QMAP-max with 1 ancillary} introduces only 98 extra gates but uses 7660s. To map on the IBM Tokyo platform, QMAP-max needs only 39 additional gates (using 7994s), while QMAP-cov times out. 
On the other hand, using \emph{QMAP-max with 0 ancillaries} adds 49 gates (using 19s) to map QFT-entangled (Quantum Fourier Transform) on IBM Tokyo, while \emph{QMAP-cov} adds only 22 gates (but using 7885s). 

Using coverings does not provide any optimality guarantee. For instance,
Table~\ref{tab:cumcover} shows that for some instance QMAP-max with 0 ancillaries
finds a solution with 6 fewer gates than QMAP-cov.
On the other hand, using maximal subarchitectures guarantees optimality for the specified number of ancillas.

\begin{table}[tb]
\caption{Cumulative numbers: improvement of QMAP-max with 0 and 1 ancillas over QMAP-cov with covering of size 4.\label{tab:cumcover}}
\centering
\begin{tabular}{|l||rrr||rrr|}
    \hline
    & \multicolumn{3}{|c||}{Running time improvement} & \multicolumn{3}{|c|}{Gate count impr.}\\
Ancillas & min (s) & avg (s) & max (s) & min & avg & max \\
    \hline   
0 anc (N=36) & -100.98 & 1013.86 & 9957.79 & -18 & -1.56 & 6 \\
1 anc (N=34) & -6855.24 & -179.22 & 9605.80 & 0 & 3.62 & 22 \\  
\hline
\end{tabular}
\end{table}

%On average, using maximal subarchitectures with 0 ancillas is 10 minutes faster per circuit than using coverings, 
%but costs -1.56 extra gates per circuit. Note that in some case we got even a better result (6 fewer gates).
%Using 1 ancilla makes it 10 minutes slower than using coverings, but the quality is better: we gain 3.62 gates on average.

\subsection{Comparison with full architectures in Q-Synth}

\begin{figure}[b]
    \centering
%    \subfloat[Q-Synth] {
      \label{fig:qsynth}
      \scalebox{0.8}{
      \begin{tikzpicture}
        \begin{axis}[%
            width=0.45\textwidth, height=0.41\textwidth,
            every tick label/.append style={font=\scriptsize},
            % x-axis
            xlabel={Full Architecture Mapping Time (s)},
            xmin=0.01,
            xmax=100000,
            xtick={0.1,1,10,100,1000,100000},
            xticklabels = {$10^{-1}$, $10^{0}$, $10^{1}$,$10^{2}$,$10^{3}$,TO},
            xmode=log,
            % y-axis
            ylabel={Subarchitectures Mapping Time (s)},
            ymin=0.01,
            ymax=100000,
            ytick = {0.1,1,10,100,1000,100000},
            yticklabels = {$10^{-1}$,$10^{0}$,$10^{1}$,$10^{2}$,$10^{3}$,TO},
            ymode=log,
            % grid
            grid style={dashed,black!12},
        ]
    
          % X=Y line
          \addplot[domain=0.001:100000, samples=8, color=black]
          {x};
    
          % draw timeout (10800) lines (vertical & horizontal)
          \draw[densely dotted, opacity=0.4] (9.3,100) -- (9.3,-100.0);
          \draw[densely dotted, opacity=0.4] (100,9.3) -- (-100.0,9.3);
    
          \begin{scope}[blend mode=soft light]
            % Large architectures
            \addplot+ [forget plot, style=dots_corange] table {./figures/data/qsynth_large.tex};
            \addplot+ [forget plot, style=x_corange] table {./figures/data/qsynth_large_timeout.tex};
    
            % Small architectures
            \addplot+ [forget plot, style=dots_cblue] table {./figures/data/qsynth_small.tex};
            \addplot+ [forget plot, style=x_cblue] table {./figures/data/qsynth_small_timeout.tex};
    
          \end{scope}
        \end{axis}
    
      \end{tikzpicture}}
%    }

    \medskip
    
      \caption{Q-Synth mapping to \emph{full architectures} and to \emph{maximal subarchitectures},
      on {\tikzdot{cblue}} \emph{small} and {\tikzdot{corange}} \emph{large} quantum platforms.}
      \label{fig:qsynth_plots}
    \end{figure}
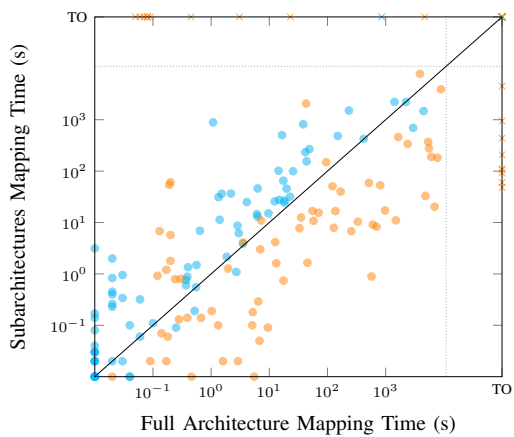

We now evaluate the effectiveness of using subarchitectures in the tool Q-Synth~\cite{Shaik_2024,Q-Synth}.
Q-Synth v2.0 uses SAT solving to find provably optimal layout mappings and can map relatively deep circuits (requiring many SWAPs) to relatively large quantum platforms (100+ qubits). 

In this experiment, we run Q-Synth on the circuits used in~\cite{Shaik_2024}. These include arithmetic circuits (3-16 qubits and 6-90 CNOT gates) from OLSQ~\cite{Tan_2020}, QUEKO benchmarks (16 or 54 qubits, 15-270 CNOT gates) from OLSQ~\cite{Tan_2020}, 
and VQE benchmarks obtained from a company (all 8 qubits and ranging from 18-79 CNOT gates). 
We map these circuits to all 6 quantum platforms mentioned in Table~\ref{tab:subarchitectures}, without ancillaries.
We have also run the experiments with ancillas: allowing 2 ancillas makes no difference; allowing unlimited ancillas saved 1 SWAP in only 4 circuit/platform combinations.

Fig.~\ref{fig:qsynth_plots} displays the time used by Q-Synth to map the circuits on either the full architecture, or on a series of subarchitectures as computed by Alg.~\ref{alg:subarchitectures}.
These timings do not include the time for computing maximal subarchitectures, since these can be precomputed once for each platform (cf.~Table.~\ref{tab:subarchitectures}).
    
From the data (orange dots below diagonal), it is clear that Q-Synth profits from using subarchitectures, in particular for the large platforms (i.e., Sycamore, Rigetti-80 and Eagle, with 53, 80 and 127 qubits). The orange outliers
above the diagonal correspond to quantum circuits with many (16-54) qubits.
On the other hand, for the smaller platforms (16-20 qubits) it is faster to map the circuits directly to the full architecture. In particular for the IBM Tokyo platform, the number of maximal subarchitectures is quite high, leading to many mapping-queries.
A cumulative overview of this data is also available in Table~\ref{tab:cumqsynth}; the underlying data is available in~\cite{KostyantinBSc}. It shows that 
using subarchitectures for large platforms speeds up each benchmark by 10 minutes on average, while for small platforms, it is 3s slower on average.

\begin{table}[tb]
    \caption{Cumulative numbers: Improvement of using maximal subarchitectures in Q-Synth, for small and large platforms.\label{tab:cumqsynth}}
    \centering
    \begin{tabular}{|l||rrr|rr|}
        \hline
        & \multicolumn{3}{|c|}{Absolute improvement} & \multicolumn{2}{|c|}{Relative improvement} \\
    Platforms & min (s) & avg (s) & max (s) &
       geomean & max \\
        \hline   
    Small (N=87)  & -1279.22 & -3.37 & 3009.18  & 0.34x & 4.28x\\
    Large (N=73)  & -3917.43 & 608.57 & 7589.12 & 5.69x & 641.70x \\  
    \hline
    \end{tabular}
    \end{table}

Using subarchitectures provides an interesting new functionality to Q-Synth. 
In the presence of ancillas, Q-Synth synthesizes a quantum layout with the minimal number of SWAP gates, but no bound on the ancilla qubits. With subarchitectures, one can now put a bound on the number of ancillas and minimize the number of SWAP gates within that ancilla bound.

\section{Conclusion and Future Perspectives}
We have reduced the search space of optimal QLS (quantum layout synthesis) of an $n$-qubit quantum circuit on a $p$-qubit physical quantum platform ($n\leq p$).
Our method imposes a bound $k$ on the number of ancilla qubits and computes all maximal subarchitectures of size $n+k$. Alg.~\ref{alg:subarchitectures} improves over
naïve enumeration, by enumerating connected subgraphs directly, and filtering them
with graph hashing and subgraph isomorphism checks. Alg.~\ref{alg:mapping} maps the circuit on all maximal subarchitectures. The search space is further pruned by bounding the number of SWAP gates to the best solution found so far.
This method can be used to find the optimal layout given at most $k$ ancillas. By increasing $k$, the method also guarantees the lowest number of ancillas to achieve the overall optimal layout. In the latter case, one would have to continue until $n+k=p$. Efficiently computing a lower bound on the required ancillas $k$ is an interesting but difficult remaining challenge~\cite{Peham_2023}.

We demonstrated that our method speeds up existing mapping tools QMAP and Q-Synth up to three orders of magnitude,
in particular for larger quantum platforms, while still providing clear optimality guarantees.
It should be stressed that our method is orthogonal to the underlying QLS method; it simply applies an existing QLS tool to each maximal subarchitecture. This paper focused on minimal gate count, but the method can be directly combined with depth-optimal QLS tools, like OLSQ2~\cite{Tan_2021}, preserving guaranteed depth-optimality.

An interesting avenue for future work would be to see if heuristic approaches, like SABRE~\cite{Li_2019} or QMAP with A*~\cite{Wille_2019} would also profit from maximal subarchitectures. Heuristic tools might provide better results when restricted to maximal subarchitectures, but this needs to be investigated.

Finally, we mention two possible extensions of our work: First, one could investigate heuristics to sort the maximal subarchitectures, aiming at finding good solutions earlier.
%A tight upperbound on the required number of SWAP gates results in more pruning for subsequent QLS calls. One could also run a fast heuristic tool to obtain an initial upperbound.
%
Another improvement would be to take into account the quality of the qubits. 
%Many quantum platforms provide calibration results.
%A quick gain in noise reduction could be to 
One could select a maximal subarchitecture of high quality qubits. 
A more difficult question would be to compute a \emph{noise-optimal quantum layout}, where a few extra SWAP gates can be compensated by selecting higher quality qubits.

\subsection*{Acknowledgements}
This paper is based on the BSc thesis of the first author \cite{KostyantinBSc}, supervized by the 2nd and 3rd author.
The research was partially funded by the Innovation Fund Denmark project
``Automated Planning for Quantum Circuit Optimization''.
This research was also partially funded by the European Innovation Council through Accelerator grant no. 190124924.
The experiments were carried out on the Grendel cluster at the
Centre for Scientific Computing, Aarhus (\url{http://www.cscaa.dk/grendel/}).

\bibliographystyle{IEEEtran}
\bibliography{Subarchitectures}

%\pagebreak
%\section*{Supplementary Material / Appendix}
    
\end{document}